%
\documentclass[lnicst]{svmultln}
\usepackage{makeidx}  
\usepackage{amsmath}
\usepackage{appendix}
\usepackage{url}

%
\begin{document}
\mainmatter              
\title{Stable and Efficient Structures for the Content Production and
Consumption in Information Communities}
\titlerunning{Stable and Efficient Structures}
%
\author{Larry Yueli Zhang\inst{1} \and Peter Marbach\inst{2}}
\authorrunning{Zhang et al.}   
%
\tocauthor{Larry Yueli Zhang, Peter Marbach}
\institute{
	University of Toronto\\
\email{ylzhang@cs.toronto.edu}
\and
University of Toronto\\
\email{marbach@cs.toronto.edu}
}

\maketitle              

\begin{abstract}        
	Real-world information communities exhibit inherent structures that
	characterize a system that is stable and efficient for content
	production and consumption. In this paper, we study such structures
	through mathematical modelling and analysis. We formulate a generic
	model of a community in which each member decides how they allocate
	their time between content production and consumption with the objective
	of maximizing their individual reward. We define the community system as
	``stable and efficient'' when a Nash equilibrium is reached while the
	social welfare of the community is maximized. We investigate the
	conditions for forming a stable and efficient community under two
	variations of the model representing different internal relational
	structures of the community. Our analysis results show that the
	structure with ``a small core of celebrity producers'' is the optimally
	stable and efficient for a community. These analysis results provide
	possible explanations to the sociological observations such as ``the Law
	of the Few'' and also provide insights into how to effectively build and
	maintain the structure of information communities.
	\keywords{modelling and analysis, communities, social networks}
\end{abstract}
\section{Introduction}

Communities are an important structure that widely exists in real-world
online and offline social networks. A common type of community is the
\emph{information community} in which the members of the community produce
content and consume the content produced by other members, with the most
popular example being Reddit~\cite{reddit} where each ``subreddit'' is
essentially an information community with a specific topic of interest.
Real-world communities often exhibit inherent structures such as the high
density of interactions within the community and the existence of a core set
of active members who would contribute the majority of the content in the
community (``the Law of the Few'')~\cite{gladwell, zhang}. There have been a
large body of research work on community detection algorithms based on such
structures. However, there is still a lack of the formal understanding of
why these structures would consistently and naturally emerge during the
formation process of real-world communities. Understanding the formation
process of these natural structures is important as it provides us a
microscopic view of the working mechanisms of communities and would enable
us to utilize communities more efficiently.

Our overall hypothesis is the following: real-world social network
structures have been going through an evolutionary process, and as a result
of that only the optimal structure (in terms of stability and efficiency)
can survive, sustain therefore exist widely in real-world social networks.
In other words, if we observe a widely existing structure in real-world
social networks, then this structure must be optimal in the sense that it
has stable user behaviours and it is efficient for the purpose of the
network.

In the case of information communities, each member in the community is an
agent who can choose to spend certain portions of their time in producing
content items or in consuming content produced by other members. In order
for the community to be stable, all members' time allocation strategies
should collectively form a Nash equilibrium, i.e., each member would get
penalized by deviating from the equilibrium strategy. A member in the
community can be rewarded by either production or consumption. For
consumption, the rewarded is from the consumed content itself; for
production, a member is rewarded when the content she produces is consumed
by other members of the community (the reputation effect). A community
structure is called ``efficient'' when it can provide its members the
highest possible amount of reward. If we use a mathematical model to
formulate the above behaviours and efficiency measures, we will then be able
to formally analyze the condition under which the community structure is
optimally stable and efficient, therefore obtain a mathematical description
of the ``surviving and sustaining'' community structure. The validity of the
model would be verified if the result of the analysis happens to agree with
the widely-existing structures observed in real-world communities.  Compared
to the empirical observations, the formal analytical results would provide
us more refined understanding of the microscopic working mechanisms of the
real-world communities.

In this paper, we formulate a model that captures the production and
consumption behaviours inside an information community. Our analysis results
show that the structure with a small set of ``celebrity producers'' is the
optimally stable and efficient structure. These analysis results provide
possible explanations to the sociological observations such as ``the Law of
the Few'' and also provide insights into how to effectively build and
maintain the structure of information communities.

\section{Related Works}

Social network analysis has been one of the fastest growing research fields
in the 21st century. We refer readers to Scott et al.~\cite{scott} for a
comprehensive coverage of the development of the subject, rather than
listing the large collection of references in this paper. Experimental works
observed interesting properties of real-world complex networks such as the
power-law degree distribution, the small-world phenomena and the community
structure. These observations lead to modelling works that tried to explain
why the observed properties would emerge, such models include the
preferential attachment models, the copying model and the forest fire model.
However, most of these works were studying macroscopic structural properties
rather than looking into the internal microscopic structures of the network.

The community structure has been an interesting topic for researcher in the
field of social network analysis. A large body of work has been devoted to
modelling and detecting community structures in large scale social networks
(e.g., \cite{fortunato, massoulie, newman, leskovec2010, kumar}). The
networks are often represented by graphs in which the vertices represent
underlying social entities and the edges represent some sort of social tie
or interaction between pairs of vertices. Our model differs in the sense
that it also considers the user behaviours on top of the network
connections.

In~\cite{bosagh}, the efficiency of a network in terms of information
diffusion is studied, a mathematical analysis is perform to investigate the
optimal network structure to achieve the best efficiency for information
diffusion (high precision, high recall and low diameter), and the result
shows that a Kronecker-graph~\cite{leskovec2010} would satisfy such
conditions. The approach taken in \cite{bosagh} is similar to the approach
we take in this paper except that we are more focussed on the community
related aspects.  The work in~\cite{galeotti} used a game-theoretic model to
study the emergence of the ``Law of the Few'' but it is also in the context
of information diffusion rather than about communities.  The work
in~\cite{marbach} is the closest to the interest of this paper.
In~\cite{marbach}, a game theoretic model is formulated to analyze the
community structures in terms of content production and consumption. Each
member's strategy involves choosing a particular interest to produce or
consume content on. The result shows that in the Nash equilibrium of the
model the members' choices form community structures. The difference of our
work from \cite{marbach} is that we focus on the internal structure of a
single community rather than on the scale of multiple communities, and
besides the Nash equilibrium, we also take the social welfare into
consideration.

\section{Model}

We will first describe the general configuration of the model and the
payoff/reward functions, then in Section~\ref{subsec-model2} and
Section~\ref{subsec-model3}, we introduce two variations of modelling the
internal relations between the community members. Both models will be
analyzed and the results will be compared in Section~\ref{sec-analysis}.

\subsection{General Configuration}

We have a single community with $n$ members indexed by $1\le i\le n$. Each
member is capable of both producing and consuming content items. The
produced content items could be chosen by all members or a subset of the
members of the community for consumption. Each member has a limited total
amount of time which could be allocated to either production or
consumption, and each member make a decision about how much of their time to
allocate to production and consumption. A member is rewarded if their
products are consumed by members of the community (the production reward),
or if the member consumes an item that is produced by a member of the
community (the consumption reward). Each member's objective is to maximize
their total individual reward from both production and consumption.

The time slot: In our model, we investigate everything that happens within a
\emph{unit time}. The assumption is that the long term behaviour of a member
is the repetition of their behaviour within a unit time.

Rates of content production and consumption: We define $N_p\ge 0$ to be the
number of content items that a member can produce if they were to spend
100\% of their unit time on production; and we let $N_c\ge 0$ be the number
of items that a member can consume within a unit time slot if they were to
spend 100\% of their time on consumption. We assume that all members
share the same values of $N_p$ and $N_c$, and we assume the following
inequality:
\begin{equation}\label{assum-nc}
	0\le \frac{N_c}{nN_p} \le 1
\end{equation}
This assumption is reasonable because $nN_p$ is the largest possible number
of content items that can be produced, an $N_c$ value that is larger than
$nN_p$ would be unrealistic.

A member's time allocation strategy: Let $\alpha_i\; (0\le \alpha\le 1)$ be
the portion of the unit time that member $i$ allocates to production
(therefore $1-\alpha$ is allocated to consumption). Each member chooses
their own $\alpha_i$, we will investigate if a set of choices of $\alpha_i$
would lead to a Nash equilibrium. Within the unit time, a member can consume
at most $(1-\alpha_i)\cdot N_c$ items. If the number of available items is
less than or equal to this number, then each member would consume all the
available items without any choice; if the total number of available items is
greater than this number, then the member would choose a subset (of size
$(1-\alpha_i)\cdot N_c$) of the available items to consume, uniformly at
random.

The \emph{production reward} models the ``reputation effect'' in social
networks, i.e., having content products consumed by other people is
rewarding for the producer of the content. The reward for each item that a
member produces is proportional to the number of members who consume the
item, with a constant factor $r_p$, i.e., if an item is consumed by $m$
members, then the reward for this item is $r_p\cdot m$. The total production
reward for a member is the sum of the rewards of all items that the member
produces. The constant factor $r_p$ is the same for all members.  The
\emph{consumption reward} of a given item is a constant $r_c$. The total
consumption reward of a member is $r_c$ multiplied by the number of items
consumed by the member.  The total individual reward of a member in the
community is the sum of their production reward and consumption reward. The
sum of the total individual rewards of all members in the community is the
\emph{social welfare}. While each member tries to maximize their own
individual reward, the overall efficiency of the community is measured by
its social welfare.

The following two subsections will define two variations of the internal
relational structure of the community.

\subsection{The Celebrity-Follower Community Structure}\label{subsec-model2}

Under the celebrity-follower relational structure, a subset of the members
of the community are ``celebrities'' that are followed by everyone in the
community, i.e., the content items produced by a celebrity member can be
seen by all members of the community. A non-celebrity member has zero
followers, i.e., an item produced by a non-celebrity member cannot be seen
or consumed by any member.

Let $\eta$ be the portion of celebrity members, i.e., the number of
celebrity members is $\eta n$. When $\eta = 1$, all members are connected
via a complete graph. When $\eta$ is small, we
have a small core of celebrities that would be responsible for producing all
content items in the community. If visualized as a directed graph, the
structure would have $\eta\cdot n^2$ edges in total. Note that we are
assuming a member can be a follower of themselves so the graph can have
self-pointing edges. This would lead to cleaner analysis results.

Note that we are not making any assumptions about how large the value of
$\eta$ is, and it is interesting to see whether the efficiency of the
community system can be different with $\eta$'s value being in different
ranges. In real-world communities, we often observe patterns that are
similar to the celebrity-follower structure, i.e., a small subset of ``elite
contributors'' would produce most of the content items that are consumed by
all members of the community, and a community typically has a significant
portion of ``lurkers''. We will be able to provide a theoretical explanation
to this real-world phenomenon.

\subsection{The Uniform Community Structure}\label{subsec-model3}

In contrast to the celebrity-follower structure where the members play
unequal roles in the community, the uniform relational structure has all
members with the equal role, i.e., every member has the same number of
followers and follows the same number of other members. In terms of a graph,
it is a regular graph where every vertex has the same in-degrees and
out-degrees.

To make this structure comparable with the celebrity-follower structure, we
let it have the same number of edges as the celebrity-follower graph. The
celebrity-follower graph discussed in the previous section has $\eta\cdot
n^2$ edges, therefore, in the uniform graph, we let each vertex have
in-degree $\eta\cdot n$ as well as out-degree $\eta\cdot n$.

\subsection{Summary of the model}

Overall, our model is a game-theoretic model where each agent (member of the
community) chooses a strategy ($\alpha_i$) with the objective of optimizing
their individual reward. The efficiency of the whole community is measured
by the social welfare (total reward of all members). The stability of the
community is indicated by whether the strategies of all members collectively
form a Nash equilibrium.

\section{Analysis}\label{sec-analysis}

Our hypothesis is that, in order to exist and sustain in the real world, a
social structure must be stable and efficient. For an information community,
this means that the members' strategies form a Nash equilibrium while the
social welfare of the community is maximized. Therefore, our analysis will
take the following approach: we first derive the set of members' strategies
that would maximize the social welfare of the community, then we investigate
the condition for this set of strategies to form a Nash equilibrium.

In Section~\ref{subsec-analysis1} and Section~\ref{subsec-analysis2}, we
perform the analyses for the celebrity-follower and uniform structures,
respectively, then we will compare and discuss the analysis results.

\subsection{Analysis of the Celebrity-Follower Structure}\label{subsec-analysis1}

The following theorem summarizes the analysis results for communities with
the celebrity-follower structure.
\begin{theorem}\label{thm5}
	For a community with the celebrity-follower structure where there are
	$\eta\cdot n$ celebrity members, the maximum social welfare and the Nash
	equilibrium are described in the following cases.

	Case 1: if $\eta < \min(\frac{N_c}{nN_p}, 1 - \frac{N_c}{nN_p}, 1 -
	\frac{N_cr_c}{nN_pr_p})$, then the
	maximum social welfare is reached when a member $i$ of the community
	takes the following strategy:
	\begin{equation}
		\alpha_i =
		\begin{cases}
			1 \quad &\text{if member }i\text{ is a celebrity}\\
			0 \quad &\text{otherwise}
		\end{cases}
	\end{equation}
	The maximum social welfare $G_{\max}$ is the following:
	\begin{equation}\label{gmax-celeb}
		G_{max} = \eta(1-\eta)n^2 N_p (r_p + r_c)
	\end{equation}
	This set of strategies \textbf{always} form a Nash equilibrium under
	this case.

	Case 2: if $\frac{1}{2} < \frac{N_c}{nN_p} \le 1$ and $1-
	\frac{N_c}{nN_p}\le \eta \le \frac{N_c}{nN_p}$, then the maximum social
	welfare is reached when a member $i$ of the community follows the
	following strategy.
	\begin{equation}
		\alpha_i =
		\begin{cases}
			\frac{N_c}{N_c + \eta n N_p} \quad &\text{if member }i\text{ is
			a celebrity}\\
			0 \quad &\text{otherwise}
		\end{cases}
	\end{equation}
	The maximum social welfare $G_{\max}$ is the following:
	\begin{equation}
		G_{\max} = \frac{\eta n^2 N_cN_p(r_p + r_c)}{N_c + \eta n N_p}
	\end{equation}
	However, this set of strategies \textbf{never} forms a Nash equilibrium
	under this case.

	Case 3: if $0\le \frac{N_c}{nN_p}\le \frac{1}{2}$ and
	$\frac{N_c}{nN_p}\le \eta \le 1 - \frac{N_c}{nN_p}$, then the maximum
	social welfare is reached when a member $i$ of the community follows the
	following strategy.
	\begin{equation}
		\alpha_i =
		\begin{cases}
			\frac{N_c}{\eta n N_p} \quad &\text{if member }i\text{ is a celebrity}\\
			0 \quad &\text{otherwise}
		\end{cases}
	\end{equation}
	The maximum social welfare under this strategy is
	\begin{equation}
		G_{\max} = N_c\left(n - \frac{N_c}{N_p}\right)(r_p + r_c)
	\end{equation}
	This set of strategies \textbf{never} forms a Nash equilibrium under this
	case.

	Case 4: $\eta >  \max(\frac{N_c}{nN_p}, 1 - \frac{N_c}{nN_p})$, the
	social welfare is maximized when a member $i$ of the community
	follows the following strategy.
	\begin{equation}
		\alpha_i =
		\begin{cases}
			\frac{N_c}{N_c + \eta n N_p} \quad &\text{if member }i\text{ is
			a celebrity}\\
			0 \quad &\text{otherwise}
		\end{cases}
	\end{equation}
	The maximum social welfare under this strategy is
	\begin{equation}
		G_{\max} = \frac{\eta n^2 N_cN_p(r_p + r_c)}{N_c + \eta n N_p}
	\end{equation}
	This set of strategies \textbf{never} forms a Nash equilibrium under this
	case.
\end{theorem}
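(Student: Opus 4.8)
The plan is to separate the ``efficiency'' part (finding the welfare-maximizing profile and $G_{\max}$) from the ``stability'' part (the Nash check), since the former is a clean piecewise-linear optimization and the latter is where the real work lies. First I would rewrite the social welfare. Each time a member consumes one (celebrity-produced) item, that single event contributes $r_c$ to the consumer and $r_p$ to the producer, so the social welfare equals $(r_p+r_c)\,C$, where $C$ is the expected total number of consumed items counted with multiplicity over consumers. A non-celebrity has no followers, so its production is never consumed and contributes nothing; hence in any welfare-maximizing profile every non-celebrity sets $\alpha_i=0$, since pure consumption is weakly optimal for them. By symmetry of the celebrities I would then restrict to a common celebrity level $a$, giving total supply $P=\eta n\,a\,N_p$ and
\[
C(a)=\eta n\,\min\!\big((1-a)N_c,\,P\big)+(1-\eta)n\,\min\!\big(N_c,\,P\big),
\]
which is continuous and piecewise linear in $a$, with breakpoints where $P=(1-a)N_c$, i.e. $a=\frac{N_c}{N_c+\eta nN_p}$, and where $P=N_c$, i.e. $a=\frac{N_c}{\eta nN_p}$.

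Next I would maximize $C(a)$ over $[0,1]$. On each linear piece the slope has a fixed sign, determined by comparing $1-\eta$ with $\frac{N_c}{nN_p}$ and by whether the breakpoint $\frac{N_c}{\eta nN_p}$ falls below $1$; these comparisons are exactly the inequalities relating $\eta$ to $1-\frac{N_c}{nN_p}$ and to $\frac{N_c}{nN_p}$ that delimit the four cases. Reading off the (unique) peak then yields the stated maximizer in each case and, after substituting back, the stated $G_{\max}$: in Case 1 the function is increasing throughout $[0,1]$ so the peak is at $a=1$, while in Cases 2 and 4 it sits at the first breakpoint $a=\frac{N_c}{N_c+\eta nN_p}$ and in Case 3 at $a=\frac{N_c}{\eta nN_p}$.

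For the stability claims I would first observe that a non-celebrity is always at a best response at $\alpha_i=0$, since its reward is the purely increasing consumption term; thus only celebrity deviations can break the equilibrium. Fixing the other members at the candidate profile, I would write a celebrity's reward as $R_i(\alpha_i)=r_c\,\min((1-\alpha_i)N_c,P)+r_p\,\frac{\alpha_i N_p}{P}\,C_{\mathrm{tot}}$, where both the supply $P$ and the system-wide consumption $C_{\mathrm{tot}}$ depend on $\alpha_i$ (the second term uses that $i$'s items receive a fraction $\frac{\alpha_i N_p}{P}$ of all consumption). In Case 1 the candidate is $\alpha_i=1$, so the only possible deviation is downward; a marginal decrease trades a production reward of $r_p(1-\eta)nN_p$ per unit time against a consumption reward of $r_cN_c$, and is unprofitable precisely when $r_p(1-\eta)nN_p\ge r_cN_c$, i.e. $\eta\le 1-\frac{N_cr_c}{nN_pr_p}$ — exactly the third hypothesis of Case 1. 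This is what upgrades ``welfare-optimal'' to ``Nash'', and I would confirm the inequality makes $R_i$ nondecreasing across both saturation regimes, so $\alpha_i=1$ is a global best response. For Cases 2--4 the candidate level $a^*$ is interior, and the goal is reversed: exhibit a profitable unilateral celebrity deviation, so the profile is never Nash.

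The main obstacle is this last computation. Because $R_i$ depends on $\alpha_i$ through the supply $P$, through which members saturate their consumption capacity, and through the share of $i$'s own items that $i$ re-consumes, it is a piecewise-rational function with kinks at breakpoints analogous to those of $C(a)$, and its selfish optimum need not coincide with the welfare optimum $a^*$. The technical heart is therefore to locate the celebrity's best response relative to $a^*$ in each of Cases 2--4 and verify a strict improvement, i.e. to show that the individual tradeoff between reputation ($r_p$) and consumption ($r_c$) never balances at $a^*$. I would attack this by computing the one-sided slopes of $R_i$ at $a^*$ and comparing $R_i(a^*)$ with the endpoint values $R_i(0)$ and $R_i(1)$; the delicate point — and the step most likely to require careful case analysis — is ruling out that the interior welfare optimum is simultaneously a local \emph{and} global individual optimum, since the left slope at $a^*$ is positive and the profile degenerates to a peak unless a boundary (pure-production or pure-consumption) response strictly dominates.
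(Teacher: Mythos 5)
Your welfare-maximization half is correct and is essentially the paper's argument in a cleaner, unified form: the reduction to $\alpha_i=0$ for non-celebrities and to a common celebrity level $a$ is exactly Lemma~\ref{lem4} and Lemma~\ref{lem42}, and your single piecewise-linear function $C(a)$ with breakpoints at $a=\frac{N_c}{N_c+\eta nN_p}$ and $a=\frac{N_c}{\eta nN_p}$ reproduces in one pass what the paper does by splitting into the regime $\eta nN_p<N_c$ (its Cases 1--2, where only the first breakpoint lies in $[0,1]$) and $\eta nN_p\ge N_c$ (its Cases 3--4, handled via Lemma~\ref{lem3}). The slope signs you name are the right ones and yield the stated maximizers and $G_{\max}$ in all four cases. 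Your Case 1 Nash argument (only downward deviations are possible from $\alpha_i=1$, and they are unprofitable precisely under $\eta\le 1-\frac{N_cr_c}{nN_pr_p}$) is the same as the paper's.

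The gap is the ``never forms a Nash equilibrium'' claim in Cases 2--4, which you correctly flag as the technical heart but do not close. The paper closes it with a two-sided marginal argument at the interior optimum $a^*$: it asserts that an upward deviation is unprofitable only if (roughly) $N_cr_c>nN_pr_p$, that a downward deviation is unprofitable only if $N_cr_c<nN_pr_p$, and that since both cannot hold some unilateral deviation must be strictly profitable. To complete your proposal you would need to supply this dichotomy (or an equivalent explicit profitable deviation). Be aware, however, that your worry about $a^*$ being an individual local optimum is not idle: when a celebrity lowers $\alpha_i$ from $a^*$ the total supply shrinks, so at the exact breakpoint her consumption reward \emph{falls} by $\delta N_pr_c$ rather than rising by up to $\delta N_cr_c$; the downward deviation is therefore always strictly unprofitable, the paper's second inequality is vacuous as a necessary condition, and the entire burden falls on showing the upward deviation is strictly profitable. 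A careful one-sided derivative of the individual reward at $a^{*+}$ has the sign of roughly $nN_pr_p-N_cr_c$, which is not unconditionally positive, so the ``never'' conclusion does not follow without either adopting the paper's coarser ``up to'' bounds or restricting parameters. As written, the proposal establishes the efficiency claims and Case 1 stability, but not the non-existence claims for Cases 2--4.
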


The detailed proof of Theorem~\ref{thm5} can be found in the appendix. This
theorem shows that Case 1 is the only case where the members' strategies
reach a Nash equilibrium while the social welfare is maximized. In other
words, in order for the community to be optimally stable and efficient, the
portion of celebrity members must be small enough, i.e., $\eta <
\min(\frac{N_c}{nN_p}, 1 - \frac{N_c}{nN_p}, 1 - \frac{N_cr_c}{nN_pr_p})$.

\subsection{Analysis of the Uniform Structure}\label{subsec-analysis2}

The following theorem summarizes the analysis results for communities with
the celebrity-follower structure.
\begin{theorem}\label{thm4}
	For a community with the uniform structure where each member has $\eta
	n$ followers and follows $\eta n$ members, the maximum social welfare
	is reached when the following set of strategies is applied.
	\begin{equation}
		\alpha_i = \frac{N_c}{N_c + \eta n N_p}\quad \forall 1\le i \le n
	\end{equation}
	The maximum social welfare $G_{\max}$ is the following:
	\begin{equation}\label{thm1globmax}
		G_{\max} = \frac{\eta n^2N_pN_c(r_c+r_p)}{N_c + \eta nN_p}
	\end{equation}
	The above set of strategies forms a Nash equilibrium if and only if the
	following condition is true.
	\begin{equation}
		\eta \le \left(\frac{N_cr_c}{nN_pr_p} + \frac{1}{n}\right)\label{pf3c1cond0}
	\end{equation}
\end{theorem}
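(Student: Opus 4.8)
The plan is to reduce the social-welfare optimization to a single scalar maximization and then settle the Nash condition by a one-sided deviation analysis. The key observation is that every consumption event---member $j$ consuming one item produced by member $i$---contributes exactly $r_c$ to $j$'s consumption reward and $r_p$ to $i$'s production reward, so the social welfare equals $(r_c+r_p)C$, where $C$ is the total number of consumption events. Maximizing welfare is therefore equivalent to maximizing $C$. Writing $S=\sum_i\alpha_i$, I would bound $C$ two ways: by consumption capacity, $C\le N_c(n-S)$, and by total availability, $C\le \eta n N_p S$ (each producer $k$ supplies $\alpha_k N_p$ items to each of its $\eta n$ followers, so summing availabilities over all consumers gives $\eta n N_p S$). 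The minimum of these two linear bounds is maximized where they coincide, at $S=\tfrac{N_c n}{N_c+\eta n N_p}$, giving $C\le \tfrac{\eta n^2 N_pN_c}{N_c+\eta nN_p}$.

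Next I would show the bound is attained by the symmetric profile $\alpha_i=\alpha^\ast:=\tfrac{N_c}{N_c+\eta nN_p}$: here each member's desired consumption $(1-\alpha^\ast)N_c$ exactly equals the available supply $\eta n\alpha^\ast N_p$, so both capacity constraints are tight simultaneously and $C$ meets the upper bound. This establishes that $\alpha^\ast$ is welfare-optimal and yields $G_{\max}=(r_c+r_p)C=\tfrac{\eta n^2N_pN_c(r_c+r_p)}{N_c+\eta nN_p}$, matching~\eqref{thm1globmax}.

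For the Nash condition I would fix all members but $i$ at $\alpha^\ast$ and compute member $i$'s reward $R(\alpha_i)$, splitting at the threshold $\alpha^\ast$. For $\alpha_i<\alpha^\ast$, each follower of $i$ sees supply below its capacity and hence consumes all $\alpha_i N_p$ of $i$'s items, while $i$'s own consumption is capped and constant; this gives $R(\alpha_i)=r_c\eta n\alpha^\ast N_p+r_p\eta nN_p\,\alpha_i$, increasing in $\alpha_i$, so no downward deviation ever helps. For $\alpha_i>\alpha^\ast$ the subtle part is the production reward: $i$'s followers now consume only their capacity $(1-\alpha^\ast)N_c$ drawn uniformly from an enlarged pool, so the fraction landing on $i$'s content is $\tfrac{\alpha_i}{\alpha_i+(\eta n-1)\alpha^\ast}$. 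Carrying this through yields a production reward proportional to $\tfrac{\alpha_i}{\alpha_i+(\eta n-1)\alpha^\ast}$, which is increasing and concave, while the consumption reward $r_c(1-\alpha_i)N_c$ is linearly decreasing.

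Because $R$ is concave on $[\alpha^\ast,1]$, the profile is a Nash equilibrium exactly when the right-hand derivative at $\alpha^\ast$ is nonpositive. Computing it gives $r_pN_p(\eta n-1)-r_cN_c\le 0$, which rearranges to $\eta\le\tfrac{N_cr_c}{nN_pr_p}+\tfrac1n$, precisely~\eqref{pf3c1cond0}; and when the derivative is positive a small upward deviation strictly improves $R$, giving the converse. I expect the main obstacle to be the production-reward computation for upward deviations: one must model the uniform-at-random consumption rule correctly to obtain the dilution factor $\tfrac{\alpha_i}{\alpha_i+(\eta n-1)\alpha^\ast}$ and verify the resulting concavity, so that the single first-order condition at $\alpha^\ast$ controls \emph{all} deviations rather than merely local ones. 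A secondary technical point is the self-loop convention, which I would invoke (as in the celebrity analysis) to keep each follower's baseline supply exactly $\eta n\alpha^\ast N_p$ and avoid off-by-one corrections.
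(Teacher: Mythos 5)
Your proposal is correct, and the welfare-maximization half takes a genuinely different route from the paper. The paper establishes optimality by a three-way case analysis (everyone over-supplied, everyone under-supplied, mixed), handling the mixed case with a pairwise averaging argument that compares an arbitrary profile to the symmetric profile with the same mean $\bar\alpha$; your argument instead observes that welfare equals $(r_c+r_p)C$ for the total consumption count $C$, bounds $C$ by the two linear functions $N_c(n-S)$ and $\eta n N_p S$ of $S=\sum_i\alpha_i$, and notes that the symmetric profile makes both bounds tight at their crossing point. This is shorter, avoids the case split entirely, and is arguably more rigorous than the paper's informal treatment of the mixed case; what it gives up is the paper's explicit description of how welfare behaves away from the optimum. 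The Nash half is essentially the paper's argument: your dilution factor $\alpha_i/(\alpha_i+(\eta n-1)\alpha^\ast)$ is algebraically identical to the paper's $\eta n\alpha^\ast/(\eta n\alpha^\ast+\delta)$ applied to $(\alpha^\ast+\delta)N_p$ items, and both reduce to the first-order condition $N_pr_p(\eta n-1)\le N_cr_c$, i.e.\ condition~(\ref{pf3c1cond0}); your appeal to concavity to promote the local condition to a global one plays the same role as the paper's explicit inequality $\delta>\alpha_i\eta n(K\eta n-K-1)$ holding for all $\delta>0$. One small inconsistency: in the downward-deviation case you assert member $i$'s own consumption is constant, but under the self-loop convention you invoke, her available supply drops by $|\delta|N_p$ and so does her consumption (the paper's $\Delta R_c=\delta N_pr_c$ term); this does not affect the conclusion, since the reward remains strictly increasing in $\alpha_i$ on $[0,\alpha^\ast]$ either way.
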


The detailed proof of Theorem~\ref{thm4} can be found in the appendix. This
result shows that, assuming the uniform community structure, there exist a
simple set of strategies that is stable while the social welfare is
maximized. What we are interested in is how the optimal efficiency of the
uniform structure compares with that of a community with the
celebrity-follower structure. The following theorem provides us a formal
result.

\begin{theorem}\label{thm6}
	Let $G_{\max\text{-celebrity}}$ be the maximum social welfare with a
	Nash equilibrium for the celebrity-follower community structure
	(Eq~(\ref{gmax-celeb})) and $G_{\max\text{-uniform}}$ be the maximum
	social welfare with a Nash equilibrium for the uniform community
	structure (Eq~(\ref{thm1globmax})). The following is always true:
	\begin{equation}
		G_{\max\text{-celebrity}} \ge G_{\max\text{-uniform}}
	\end{equation}
\end{theorem}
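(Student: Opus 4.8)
The plan is to reduce the claimed inequality to a single elementary algebraic condition on $\eta$ and then observe that this condition is guaranteed by the very hypothesis under which $G_{\max\text{-celebrity}}$ is defined. Recall that $G_{\max\text{-celebrity}}$ is the Case~1 value of Theorem~\ref{thm5}, namely Eq~(\ref{gmax-celeb}), which is the only celebrity-follower case that attains its maximum welfare at a Nash equilibrium; hence the comparison implicitly lives in the Case~1 regime, where in particular $\eta < 1 - \frac{N_c}{nN_p}$.

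First I would introduce the shorthand $\beta := \frac{N_c}{nN_p}$, so that assumption~(\ref{assum-nc}) reads $0 \le \beta \le 1$ and $N_c = \beta n N_p$. Substituting this into the two welfare expressions gives
\begin{equation}
	G_{\max\text{-celebrity}} = \eta(1-\eta)\,n^2 N_p (r_p + r_c), \qquad G_{\max\text{-uniform}} = \frac{\eta\,\beta\,n^2 N_p (r_p + r_c)}{\beta + \eta}.
\end{equation}
The case $\eta = 0$ is trivial (both sides vanish), so assume $\eta > 0$; since $N_p$ and $r_p + r_c$ are positive, the common factor $\eta\, n^2 N_p (r_p+r_c) > 0$ can be divided out, reducing the goal to $1 - \eta \ge \frac{\beta}{\beta + \eta}$.

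Next I would clear the (positive) denominator $\beta + \eta$ and simplify: the inequality $(1-\eta)(\beta+\eta) \ge \beta$ expands to $\eta(1 - \beta - \eta) \ge 0$. With $\eta > 0$ this is equivalent to $\eta \le 1 - \beta$, i.e. $\eta \le 1 - \frac{N_c}{nN_p}$. The final step is to invoke the Case~1 hypothesis of Theorem~\ref{thm5}, which requires $\eta < \min\!\big(\tfrac{N_c}{nN_p},\, 1 - \tfrac{N_c}{nN_p},\, 1 - \tfrac{N_c r_c}{nN_p r_p}\big)$ and in particular $\eta < 1 - \frac{N_c}{nN_p} = 1 - \beta$; this makes $1 - \beta - \eta > 0$ and closes the argument.

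The main obstacle is conceptual rather than computational: one must recognize that the inequality is \emph{not} universally true over all $\eta$, but only over the Case~1 range where the celebrity-follower structure is actually stable. Indeed, in Cases~2--4 the celebrity welfare formula coincides with the uniform formula, so the strict domination relies essentially on restricting to the stable regime $\eta < 1 - \frac{N_c}{nN_p}$. Identifying that precisely this constraint from Theorem~\ref{thm5} is what forces $1 - \beta - \eta > 0$ is the crux of the proof; the remaining algebra is routine.
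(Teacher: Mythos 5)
Your proposal is correct, and it follows the same overall route as the paper's proof in Appendix~\ref{sec-proof-thm6}: cancel the common factor $\eta n^2 N_p(r_p+r_c)$, clear the denominator, and reduce to an elementary condition on $\eta$. However, your version is actually the more careful one. The paper expands $(1-\eta)(N_c+\eta n N_p)\ge N_c$ to $\eta(nN_p - N_c + n\eta N_p)\ge 0$, which contains a sign error: the correct expansion is $\eta(nN_p - N_c - \eta n N_p)\ge 0$, i.e.\ exactly your $\eta(1-\beta-\eta)\ge 0$. Because of that error the paper concludes the condition is $\eta \ge (N_c - nN_p)/(nN_p)$, hence ``always true'' from assumption~(\ref{assum-nc}) alone; the correct reduction instead yields $\eta \le 1 - N_c/(nN_p)$, which is \emph{not} automatic and must be supplied by the Case~1 hypothesis of Theorem~\ref{thm5} under which Eq~(\ref{gmax-celeb}) is the stable maximum — precisely the step you identify as the crux. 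So your proof repairs a genuine flaw in the paper's argument while reaching the same (true) conclusion. One minor quibble: your closing remark that in Cases~2--4 the celebrity welfare formula coincides with the uniform one is accurate for Cases~2 and~4 but not for Case~3, where $G_{\max}=N_c(n-N_c/N_p)(r_p+r_c)$; this does not affect your proof, which is sound as written.
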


The detailed proof of Theorem~\ref{thm6} can be found in the appendix. This
theorem provides a simple and clear result: given being in its optimally
stable and efficient state, a community with the celebrity-follower
structure always has a better optimal social welfare than a community with
the uniform structure.

\section{Discussions}

The combination of the analysis results in Section~\ref{subsec-analysis1}
and \ref{subsec-analysis2} provide us two different angles of explaining the
common ``law-of-the-few'' structural patterns that widely exist in real-life
information communities. A given community structure, in order to exist and
sustain, must be both stable and efficient, meaning that the community can
stably stay at the state with the maximum social welfare. Theorem~\ref{thm5}
tells us that the community can only be stable and efficient if there is a
small enough ``core'' of celebrity members who will actively contribute all
the content to be consumed by all members of the community, while the
majority of the community members would simply consume the content produced
by the core members. A community structure that does not satisfy this
condition would not be stable therefore would not commonly exist in reality.

Moreover, among the different possible structure that are both stable and
efficient, some structures are more efficient than others.
Theorem~\ref{thm6} shows that the small-core celebrity-follower structure is
not only stable and efficient, but also it is more efficient than other
stable structures such as the uniform structure.

With the above two factors taken into account, the celebrity-follower
structure with a small set of celebrities becomes the winner, therefore
becomes the commonly existing structure in real-world information
communities.

In the equilibrium state, the strategies of the celebrity and non-celebrity
members are clearly differentiated: the celebrity members should dedicate
all of their time in production whereas the non-celebrity members should
spend all of their time on consumption. These specialized producing and
consuming behaviours also coincide with real-world observations: in a web
service such as Reddit, the visitors of a typical subreddit would often
separate into two different roles, i.e., the ``active contributors'' who
frequently post content in the subreddit and the ``lurkers'' who would
always just consume content silently.

The analysis results also provide insights into how to effectively build and
maintain information communities. The most important takeaway from our
analysis results is that there should be mechanisms that encourage the
formation of a small-core celebrity-follower structure inside the community.
For example, many online social network applications use features such as
``thumb-up'' or ``upvote'' to promote and reward high quality content that
are liked by many community members. Besides providing effective content
filtering (ranking by votes), this voting mechanism also encourages the
optimally stable and efficient community structure: since the production
reward is only earned when a post is upvoted, the members who would produce
low-quality content would not be rewarded and would essentially become the
non-celebrity members in the celebrity-follower structure. The members who
produce high-quality content would be rewarded by the upvotes and becomes
the celebrities in the community. The size of the core of celebrities will
tend to be small if the display of the content in the community is ranked by
popularity: most members will only consume a small portion of the top-ranked
content items therefore only a small set of high quality producers would
actually be rewarded and become the real core of the community. This
analysis would lead to an interesting and counter-intuitive hypothesis: if
the content display of the community is such that different members would
see a diverse range of different items, then this would cause the formation
of a larger-sized celebrity core or a uniform-like structure in the
community which would make the community structure less stable. It would be
interesting to empirically verify if this hypothesis is true in practice.

Another interesting insight is that, in the optimal community structure, the
number of celebrity members in the core, i.e., $\eta\dot n$, must satisfy
that $\eta\cdot n < N_c/N_p$. This means that the size of the of celebrities
core does \emph{not} increase as the size of the community $n$ increases.
This could be a possible reason of why we have communities in the first
place: having a large number of people communicating in a single giant
community is inefficient in terms of the total amount of production because
it only allows a small number of core members to contribute in
content production. Larger total production rate can be achieved by dividing
people into different smaller communities each of which has its own core
members, since the total number of people who will contribute in content
production would be multiplied by the number of communities.

\section{Conclusions}

This paper attempts to obtain a formal understanding of the natural
structural patterns of real-world information communities. We formulate a
mathematical model that describes the generic content production and
consumption behaviours in a community. The analysis result shows that the
small-core celebrity-follower structure is the optimal structure that would
lead to the optimally efficient and stable community. These analytical
results agree with the sociological observations on real-world information
communities. Besides providing a refined microscopic view of the working
mechanisms of information communities, the analysis results also provide
useful insights into how to better build and maintain the structure of
information communities. Designing efficient mechanisms that encourage the
formation of stable and efficient communities would be an interesting topic
for future works.

%
%

\newpage
\noindent\textbf{\Large Appendices}

\appendix

\section{Proof of Theorem~\ref{thm5}}\label{sec-proof-thm5}

In the following two subsections we will prove the two cases separately.

\subsection{Proof of Case 1 and 2}

We will first analyze the strategies for maximizing the social welfare and
then investigate the Nash equilibriums of the such strategies.

\subsection*{Maximum Social Welfare for Case 1 and Case 2}

In this case, $\eta < \frac{N_c}{nN_p} \Rightarrow \eta n N_p < N_c$, i.e.,
the celebrity members are not able satisfy everyone's consumption demand
even if they are producing at full capacity. First, we present the following
lemma.

\begin{lemma}\label{lem4}
The optimal maximum social welfare is reached only if all non-celebrity
	members have $\alpha_i = 0$.
\end{lemma}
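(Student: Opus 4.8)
The plan is to prove Lemma~\ref{lem4} by a contradiction/exchange argument: I would assume an optimal strategy profile in which some non-celebrity member $j$ has $\alpha_j > 0$, and show that reallocating all of $j$'s production time to consumption (i.e., setting $\alpha_j = 0$) cannot decrease the social welfare, and in the regime of Case~1 and Case~2 strictly increases it, contradicting optimality. The key observation is that a non-celebrity member has zero followers, so any content they produce can be consumed by nobody; hence the production reward generated by $j$ is identically zero regardless of $\alpha_j$. Thus the only way $\alpha_j > 0$ can affect the social welfare is through the consumption side.

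First I would write the social welfare $G$ as the sum of a production-reward term and a consumption-reward term, and isolate the dependence on member $j$. Since $j$'s products reach no one, $j$ contributes nothing to total production reward; lowering $\alpha_j$ from a positive value to $0$ therefore leaves the production-reward term weakly unaffected (it does not remove any consumed items, because none of $j$'s items were ever consumed). Meanwhile, decreasing $\alpha_j$ increases $j$'s available consumption time from $(1-\alpha_j)N_c$ toward $N_c$, so $j$ can consume (weakly) more items and collect more consumption reward at rate $r_c$ per item. The heart of the argument is to confirm that these extra items are genuinely available to be consumed: in the Case~1/Case~2 regime the total supply of content produced by the celebrity core, $\eta n \cdot (\text{their production})$, is large relative to the aggregate consumption demand in the relevant sense, so that $j$'s freed-up consumption capacity can be filled by as-yet-unconsumed celebrity items. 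I would make this precise by comparing $j$'s new consumption capacity against the pool of available items net of what other members consume.

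The main obstacle I anticipate is the bookkeeping of item availability under the uniform-random consumption rule. Because each member consumes a uniformly random subset when supply exceeds their capacity, I must argue about \emph{expected} numbers of distinct consumptions and ensure that increasing $j$'s consumption does not merely duplicate items already counted, nor cannibalize the production reward attributed to celebrity producers in a way that offsets the gain. The cleanest route is to observe that every consumed item yields reward $r_c$ to the consumer and additionally $r_p$ to its (celebrity) producer, so an extra distinct consumption by $j$ adds $r_c + r_p$ to $G$ (or $r_c$ if the item was already at full readership, which is still nonnegative), and to show that in this regime there is always slack in the unconsumed celebrity supply for $j$ to draw on. Handling the boundary behaviour of this slack is where the inequality $\eta n N_p < N_c$ (and the companion bounds defining the case) does the real work.

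Once this exchange step is established for a single non-celebrity $j$, I would conclude by applying it iteratively (or simultaneously to all non-celebrity members), since setting one non-celebrity's $\alpha$ to zero does not reduce the slack available to the others. This yields an optimal profile in which every non-celebrity member has $\alpha_i = 0$, which is exactly the statement of the lemma.
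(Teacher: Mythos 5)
Your proposal rests on exactly the same observation as the paper's (one-line) proof: a non-celebrity member has zero followers, so their production reward is identically zero and any production time is better reallocated to consumption. Your exchange argument and the bookkeeping about whether the freed-up consumption capacity can actually be filled are a more careful elaboration of the same idea, so this is the same approach, just spelled out in more detail than the paper provides.
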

\begin{proof}
	This result is easy to see since a product of a non-celebrity member
	never gets consumed by others, a non-celebrity member's optimal strategy
	is always to spend 100\% of their time on consumption.
\end{proof}

\begin{lemma}\label{lem42}
	When the social welfare is maximized, all celebrity members must choose
	the same $\alpha_i$ value.
\end{lemma}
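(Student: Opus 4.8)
The plan is to reduce the question to a single concave maximization over the celebrity allocations and then prove that this maximizer is \emph{unique} (and symmetric), since uniqueness of a symmetric maximizer is exactly what forces every welfare-maximizing profile to assign all celebrities the same $\alpha_i$. First, invoking Lemma~\ref{lem4} I fix every non-celebrity at $\alpha_i=0$. Writing the social welfare as $G=(r_p+r_c)C$, where $C$ is the total number of (item, consumer) consumption events, I use that in the present regime ($\eta nN_p<N_c$) the whole pool of available items $A:=N_p\sum_{i\in\mathrm{celeb}}\alpha_i$ never exceeds $\eta nN_p<N_c$, so each of the $(1-\eta)n$ non-celebrities consumes the entire pool while each celebrity $i$ consumes $\min\bigl((1-\alpha_i)N_c,\,A\bigr)$. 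Hence, with $k:=\eta n$, maximizing $G$ is the same as maximizing $C(\alpha)=(1-\eta)n\,A+\sum_{i\in\mathrm{celeb}}\min\bigl((1-\alpha_i)N_c,\,A\bigr)$ over $\alpha\in[0,1]^{k}$.

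Next I record two structural facts. The function $C$ is a sum of an affine term and $k$ terms each a minimum of two affine functions, hence $C$ is \emph{concave}, and it is invariant under permuting the celebrity indices. Restricting to symmetric profiles $\alpha_i\equiv\alpha$ and solving the resulting one-variable concave problem pins down the candidate optimum $\alpha^\star$: the corner $\alpha^\star=1$ in Case~1 (where $(1-\eta)nN_p>N_c$) and the interior point $\alpha^\star=N_c/(N_c+\eta nN_p)$ in Case~2 (where $(1-\eta)nN_p<N_c$), at which $(1-\alpha^\star)N_c=A$. Averaging an arbitrary maximizer over all index permutations already produces a symmetric maximizer by concavity and symmetry, but this only gives \emph{existence}; the real content is to upgrade it to \emph{necessity}.

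The heart of the proof is therefore to show that $\alpha^\star$ is the \emph{only} maximizer. Since $C$ is concave, it suffices to show the one-sided directional derivative satisfies $C'(d;\alpha^\star)<0$ for every feasible direction $d\neq0$, which makes $\alpha^\star$ a strict local, hence unique global, maximum. Writing $D:=\sum_i d_i$ and treating the Case~2 kink (where both branches of each $\min$ are active), I get $C'(d;\alpha^\star)=(1-\eta)nN_pD+\sum_i\min(-N_cd_i,\,N_pD)$. Bounding each $\min$ by its first argument yields $C'\le D\bigl[(1-\eta)nN_p-N_c\bigr]$, bounding by the second yields $C'\le nN_pD$, and for $D=0$ direct evaluation gives $C'=-N_c\sum_{d_i>0}d_i$; splitting on the sign of $D$, the strict case inequalities make each of these strictly negative for $d\neq0$. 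The analogous computation at the Case~1 corner, where feasible directions satisfy $d_i\le0$, gives $C'=D\bigl[(1-\eta)nN_p-N_c\bigr]<0$. Either way $\alpha^\star$ is the unique maximizer, so at the optimum all celebrities necessarily share the same $\alpha_i$.

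The main obstacle is precisely this strictness verification, together with the kink bookkeeping that feeds it. The bound $C'\le D\bigl[(1-\eta)nN_p-N_c\bigr]$ degenerates exactly at the Case~1/Case~2 interface $\eta=1-N_c/(nN_p)$, where $(1-\eta)nN_p=N_c$ and the directional derivative vanishes along an entire ridge; there asymmetric profiles attaining the same maximal welfare genuinely exist (e.g. raising some celebrities to $\alpha_i=1$ while others stay below leaves $C$ unchanged). Consequently the ``must'' is valid throughout the open interiors of Case~1 and Case~2, where the argument above delivers full necessity, and I expect essentially all the work to lie in carefully handling the $\min$-kinks in the directional derivative and in isolating this knife-edge $\eta=1-N_c/(nN_p)$, at which the statement only survives in its weaker ``without loss of generality symmetric'' form.
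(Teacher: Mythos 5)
Your proof is correct, and it takes a genuinely different route from the paper's. The paper never argues Lemma~\ref{lem42} directly: it points to the Case~3 portion of the proof of Theorem~\ref{thm4}, which compares an arbitrary profile against the \emph{averaged} profile $\bar{\alpha}$ by pairing a member at $\bar{\alpha}+\delta$ with one at $\bar{\alpha}-\delta$ and bounding each exchange (the gainer gains \emph{at most} what the loser loses). That pairing argument only delivers a weak inequality --- the symmetric configuration is \emph{no worse} --- i.e.\ it proves the ``without loss of generality symmetric'' form, not the literal ``must'' in the lemma's statement. Your route (fix non-celebrities at $\alpha_i=0$ by Lemma~\ref{lem4}, write $G=(r_p+r_c)C$ with $C$ a sum of an affine term and minima of affine functions, hence concave and permutation-invariant, then show strict negativity of every one-sided directional derivative at the candidate optimum) buys strictly more: genuine uniqueness of the maximizer in the open interiors of Cases 1 and 2, and --- importantly --- it exposes that the lemma as literally stated \emph{fails} on the knife edge $\eta = 1 - N_c/(nN_p)$ (which the theorem's Case~2 includes via its $\le$), where $(1-\eta)nN_p = N_c$ makes the bound $C'\le D\bigl[(1-\eta)nN_p - N_c\bigr]$ degenerate and asymmetric maximizers genuinely exist; your explicit ridge construction there is right, and the paper glosses over this. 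Both versions suffice for Theorem~\ref{thm5}, since downstream only the maximal value $G_{\max}$ and the symmetric candidate are used. Two minor points: the kink bookkeeping you worry about is standard (the directional derivative of $\min(f,g)$ at a tie point is $\min(f'(d),g'(d))$, and at the Case~1 corner the min is strictly on its first branch, so no kink arises there); and your argument as written covers only the regime $\eta nN_p < N_c$, whereas the paper later re-invokes the lemma in Cases 3--4 --- the extension is routine, since replacing the non-celebrity term by $(1-\eta)n\min(N_c,A)$ preserves concavity, but it would need to be said.
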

\begin{proof}
	The proof is very similar to part of the proof of Theorem~\ref{thm4} in
	Appendix~\ref{sec-proof-thm4}, therefore is omitted for succinctness.
	The idea is to compare an arbitrary configuration of $\alpha_i$ values
	with the configuration where everyone choose the same value.
\end{proof}

Now let's investigate the $\alpha_i$ value that would maximize the social
welfare for Case 1 and Case 2. There are two cases that are possible when
calculating the social welfare of the community: whether the celebrity
members are under-supplied or over-supplied. For the celebrity members to be
over-supplied, we need the following to hold:
\begin{align}
	& (1-\alpha_i) N_c \le \alpha_i \eta n N_p\\
	\Leftrightarrow\quad & \alpha_i \ge \frac{N_c}{N_c + \eta n N_p}
\end{align}
Otherwise, the celebrity members are under-supplied.

When the celebrity members are over-supplied, the social welfare is
calculated as the following:
\begin{align}\label{thm5c12g1}
	G &= \alpha_i\eta n N_p \left[(1-\eta)n +
	\frac{(1-\alpha_i)N_c}{\alpha_i\eta n N_p}\cdot \eta n\right](r_p + r_c)
	\\
	& = \eta n (r_p + r_c)\left[\alpha_i\left((1-\eta)nN_p - N_c\right) + N_c\right]
\end{align}

When the celebrity members are under-supplied, the social welfare is
calculated as the following:
\begin{align}\label{thm5c12g2}
	G = \alpha_i\eta n^2 N_p
\end{align}

Combining Eq~(\ref{thm5c12g1}) and (\ref{thm5c12g2}), the social welfare for
Case 1 and Case 2 is the following.
\begin{equation}
	G =
	\begin{cases}
		\alpha_i\eta n^2 N_p \quad &\text{if } 0 \le \alpha_i < \frac{N_c}{N_c
	+ \eta n N_p}\\
	\eta n (r_p + r_c)\left[\alpha_i\left((1-\eta)nN_p - N_c\right) +
		N_c\right] \quad &\text{if } \frac{N_c}{N_c + \eta n N_p} \le \alpha_i \le 1
	\end{cases}
\end{equation}
When the celebrity members are under-supplied, $G$ is always an increasing
function of $\alpha_i$. When the celebrity members are over-supplied, $G$
could be either an increase or decreasing function of $\alpha_i$ depending
on whether the coefficient $((1-\eta)nN_p - N_c)$ is positive or negative.
Note that
\begin{equation}
	(1-\eta)nN_p - N_c > 0 \quad\Leftrightarrow\quad \eta < 1 - \frac{N_c}{nN_p}
\end{equation}
Therefore, for Case 1 where $\eta < 1 - \frac{N_c}{nN_p}$, the social
welfare $G$ is maximized when $\alpha_i = 1$, and the maximum social welfare
is
\begin{equation}
	G_{max} = \eta(1-\eta)n^2 N_p (r_p + r_c)
\end{equation}
Therefore, for Case 2 where $\eta \ge 1 - \frac{N_c}{nN_p}$, the social
welfare $G$ is maximized when $\alpha_i = \frac{N_c}{N_c + \eta n N_p}$, and
the maximum social welfare is
\begin{equation}
	G_{\max} = \frac{\eta n^2 N_cN_p(r_p + r_c)}{N_c + \eta n N_p}
\end{equation}

\subsection*{Nash Equilibrium for Case 1}

Now consider the Nash equilibrium for Case 1. For a non-celebrity member,
$\alpha_i = 0$ is clearly the best strategy since there is no additional
reward if they spent any time in production.

For a celebrity member, consider the small deviation from the equilibrium
strategy, i.e., rather than choosing $\alpha_i = 1$, it chooses $\alpha_i =
1 - \delta$ for some $\delta > 0$. This member's product reward is decreased
by $\delta N_p (1-\eta)n r_p$. This member's consumption reward is increased
by up to $\delta N_c r_c$. We want to show that the change of the member's
individual reward to be negative, i.e.,
\begin{align}
	& \Delta R \le \delta N_c r_c - \delta N_p (1-\eta)n r_p < 0\\
	\Leftrightarrow\quad &
	\eta < 1 - \frac{N_cr_c}{nN_pr_p}
\end{align}
which is true according to the assumption of Case 1. Therefore, the Nash
equilibrium of Case 1 is proven.

\subsection*{Nash Equilibrium for Case 2}

Consider a celebrity member who would change their $\alpha_i$ to $\alpha_i
+ \delta$ and $\alpha_i - \delta$ for some $\delta > 0$. In order for a Nash
equilibrium to be formed, it is necessary that both ways of changing result
in a smaller individual reward.

If the strategy is changed to $\alpha_i + \delta$, then the gain of
production reward is up to $\delta N_p n r_p$ and the loss of consumption
reward is $\delta N_c r_c$. In order for the individual reward to decrease,
we need:
\begin{align}
	& \Delta R \le \delta N_p n r_p - \delta N_c r_c < 0\\
	\Leftrightarrow\quad &
	\frac{N_cr_c}{nN_pr_p} > 1\label{thm5c2nash1}
\end{align}

If the strategy is changed to $\alpha_i - \delta$, then the loss of
production reward is $\delta N_p n r_p$ and the gain of consumption reward
is up to $\delta N_c r_c$. In order for the individual reward to decrease,
we need:
\begin{align}
	& \Delta R \le \delta N_c r_c - \delta N_p n r_p < 0\\
	\Leftrightarrow\quad &
	\frac{N_cr_c}{nN_pr_p} < 1\label{thm5c2nash2}
\end{align}

A Nash equilibrium requires that both (\ref{thm5c2nash1}) and
(\ref{thm5c2nash2}) to be true, which is impossible, therefore the Nash
equilibrium cannot exist. The Nash equilibrium of Case 1 is proven, which
completes all proofs for Case 1 and Case 2.

\subsection{Proof of Case 3 and Case 4}

Case 3 and Case 4 are the cases where the celebrity members are capable of
supplying any member in community, i.e., $\eta n N_p \ge N_c$. First of
all, Lemma~\ref{lem4} and Lemma~\ref{lem42} both apply to this part of the
proof. And we have the following additional lemma.

\begin{lemma}\label{lem3}
	When the social welfare is maximized, a non-celebrity member must not be
	over-supplied, and a celebrity member must not be under-supplied.
\end{lemma}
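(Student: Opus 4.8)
The plan is to reduce the welfare maximization to a single real variable and read off the optimum from the monotonicity of a piecewise-linear objective. By Lemma~\ref{lem4} every non-celebrity member takes $\alpha_i = 0$, and by Lemma~\ref{lem42} all celebrity members share a common allocation, which I denote $a$. Since every member follows exactly the set of celebrities, the number of items available to \emph{each} member equals the total celebrity production $P = \eta n N_p\,a$; consequently a non-celebrity consumes $\min(N_c,P)$ items and a celebrity consumes $\min((1-a)N_c,\,P)$ items, where $(1-a)N_c \le N_c$. The social welfare then equals $(r_p+r_c)$ times the total number of (member, item) consumption pairs
\begin{equation}
	C(a) = (1-\eta)n\,\min(N_c,P) + \eta n\,\min((1-a)N_c,\,P).
\end{equation}
This is the accounting principle underlying Eq.~(\ref{thm5c12g1}): by linearity of expectation each consumption pair contributes $r_c$ to the consumer and $r_p$ to the producer of the consumed item, so the total consumption reward and the total production reward equal $r_c\,C(a)$ and $r_p\,C(a)$ respectively. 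Hence maximizing $G$ is equivalent to maximizing $C(a)$, and I only need to locate the maximizer of $C$.

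The interval $[0,1]$ splits into three regimes according to how $P$ compares with $(1-a)N_c$ and $N_c$. If $P > N_c$ (the non-celebrities are over-supplied), then also $P > (1-a)N_c$, both member types are capped at their consumption capacities, and $C(a) = nN_c(1-\eta a)$, which is strictly decreasing in $a$; so the optimum cannot lie in this regime, since lowering $a$ toward the threshold $a = N_c/(\eta n N_p)$ strictly increases $C$. Symmetrically, if $P < (1-a)N_c$ (the celebrities are under-supplied), then also $P < N_c$, every member is capped at the availability $P$, and $C(a) = nP = \eta n^2 N_p\,a$, which is strictly increasing in $a$; so the optimum cannot lie here either, since raising $a$ toward $a = N_c/(N_c+\eta n N_p)$ strictly increases $C$. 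The maximizer therefore satisfies $N_c/(N_c+\eta n N_p) \le a \le N_c/(\eta n N_p)$, equivalently $(1-a)N_c \le P \le N_c$, which is exactly the claim that non-celebrities are not over-supplied and celebrities are not under-supplied.

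The step I expect to be the main obstacle is justifying the identity $G = (r_p+r_c)\,C(a)$ cleanly, i.e.\ recognizing that the expected total production reward, which a priori is a sum over items of $r_p$ times a randomly determined number of consumers, collapses to $r_p\,C(a)$ once each item is consumed by $C(a)/P$ members in expectation. The remaining work is routine: substituting the correct $\min$-caps in each regime and checking that the two extreme regimes are genuinely adjacent to the middle one, which holds because $N_c/(N_c+\eta n N_p) < N_c/(\eta n N_p)$, so that the monotonicity arguments push the optimum strictly into the desired middle range rather than to a spurious endpoint.
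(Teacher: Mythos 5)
Your proof is correct and rests on the same idea as the paper's own (very brief) argument: the welfare is strictly decreasing in the celebrities' production time when everyone is over-supplied and strictly increasing when everyone is under-supplied, so the maximizer must lie in the intermediate regime. The paper states this as a two-line perturbation argument, while you make it explicit by invoking Lemmas~\ref{lem4} and~\ref{lem42} to reduce to the single variable $a$ and writing out the piecewise-linear objective $C(a)$; the substance is the same.
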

\begin{proof}
	Suppose the non-celebrity members were over-supplied, then the
	high-quality members must also be over-supplied. Reducing production
	time will increase the total number of consumptions for sure, therefore
	the global welfare would be increased.

	Similarly, suppose the celebrity members were under-supplied, then the
	non-celebrity members must also be under-supplied. Increasing production
	time will increase the total number of consumptions for sure, therefore
	the global welfare would be increased.
\end{proof}

\subsection*{Maximum Social Welfare for Case 3}

We now write down the expression of the social welfare. Because of
Lemma~\ref{lem42}, we let $\alpha_h$ be the common $\alpha_i$ value of all
celebrity members, then the total number of produced items is $\alpha_h \eta
n N_p$. For each item, the expected number of consumptions from the
non-celebrity members is $(1-\eta)n$ since each non-celebrity member is not
over-supplied; the expected number of consumptions from the celebrity
members is the following (because they are not under-supplied):
\begin{equation}
	\frac{(1-\alpha_h)N_c}{\alpha_h N_p \eta n}\cdot \eta n
	= \frac{N_c}{N_p}\left(\frac{1}{\alpha_h}-1\right)
\end{equation}
Therefore, total reward is the $(r_p + r_c)$ multiplied by the total
number of consumptions, which is
\begin{align}
	R &= (r_p+r_c)\cdot \alpha_hN_p \eta n\cdot \left[(1-\eta)n +
	\frac{N_c}{N_p}\left(\frac{1}{\alpha_h}-1\right)\right] \\
	&= (r_p + r_c)\cdot \left[\alpha_h\left[N_p \eta(1-\eta)n^2) - N_c \eta
	n\right] + N_c \eta n\right] \label{pf5c1e1}
\end{align}
The expression in (\ref{pf5c1e1}) could be an increasing or decreasing
function of $\alpha_h$. It depends the sign of coefficient of $\alpha_h$.
Therefore, we need to divide into two sub-cases, which are exactly Case 3
and Case 4.

Case 3: The coefficient is non-negative, i.e.,
\begin{align}
	&\left[N_p \eta(1-\eta)n^2 - N_c \eta n\right] \ge 0 \\
	\Leftrightarrow\quad
	&\eta \le 1 - \frac{N_c}{N_p}
\end{align}
In this case, to maximize the global welfare, $\alpha_h$ should be as large
as possible, i.e., $\alpha_h$ should be the largest possible value that
keeps the non-celebrity members non-oversupplied. It is the value that
generates exactly $N_c$ items, which is
\begin{equation}
	\alpha_h = \frac{N_c}{\eta n N_p}
\end{equation}
Plug the above value into Eq.~(\ref{pf5c1e1}), the maximum social welfare is
\begin{align}
	G_{\max} = (r_p + r_c)N_c\left(n - \frac{N_c}{N_p}\right)
\end{align}

\subsection*{Nash Equilibrium for Case 3}

Now investigate the Nash equilibrium for Case 3 by consider the following
two types of changes (let $\delta > 0$ denote the change amount):a celebrity
member increasing from $\alpha_i$ to $\alpha_h + \delta$; and a celebrity
member decreasing from $\alpha_h$ to $\alpha_h - \delta$.

If the strategy is changed to $\alpha_h + \delta$, then the gain of
production reward is $\delta N_p ((1-\eta n)n + \eta n (1-\alpha_h))r_p$ and
the loss of consumption reward is $\delta N_c r_c$. In order for the
individual reward to decrease, we need:
\begin{align}
	& \Delta R = \delta N_p ((1-\eta n)n + \eta n (1-\alpha_h))r_p - \delta
	N_c r_c < 0\\
	\Leftrightarrow\quad &
	\frac{N_cr_c}{(nN_p-N_c)r_p} > 1\label{thm5c3nash1}
\end{align}

If the strategy is changed to $\alpha_h + \delta$, then the loss of
production reward is $\delta N_p ((1-\eta n)n + \eta n (1-\alpha_h))r_p$ and
the gain of consumption reward is $\delta N_c r_c$. In order for the
individual reward to decrease, we need:
\begin{align}
	& \Delta R = \delta N_c r_c\delta N_p ((1-\eta n)n + \eta n
	(1-\alpha_h))r_p  < 0\\
	\Leftrightarrow\quad &
	\frac{N_cr_c}{(nN_p-N_c)r_p} < 1\label{thm5c3nash2}
\end{align}

A Nash equilibrium requires that both (\ref{thm5c3nash1}) and
(\ref{thm5c3nash2}) to be true, which is impossible, therefore the Nash
equilibrium cannot exist. The Nash equilibrium of Case 3 is proven.

\subsection*{Maximum Social Welfare for Case 4}

Case 4: The coefficient in (\ref{pf5c1e1}) is negative, i.e.,
\begin{align}
	&\left[N_p\eta(1-\eta)n^2 - N_c\eta n\right] < 0 \\
	\Leftrightarrow\quad
	&\eta > 1 - \frac{N_c}{nN_p}
\end{align}
In this case, to maximize the social welfare, $\alpha_h$ should be as small
as possible, i.e., $\alpha_h$ should be the smallest possible value that
keeps the celebrity members non-under-supplied. The number of produced
items should be exactly $(1-\alpha_h)N_c$ items, i.e.,
\begin{align}
	&\alpha_h N_p\eta n = (1-\alpha_h)N_c \\
	\Leftrightarrow\quad
	&\alpha_h = \frac{N_c}{N_c + \eta n N_p}
\end{align}
Plug the above value into Eq.~(\ref{pf5c1e1}), the maximum global welfare is
\begin{equation}
	G_{\max} = \frac{\eta n^2 N_cN_p(r_p + r_c)}{N_c + \eta n N_p}
\end{equation}

Note that, Case 4's optimal state for social welfare is in fact exactly the
same as that of Case 2. Therefore, the Nash equilibrium of Case 4 follows
the same configuration as that of Case 2, which means this set of strategy
cannot form a Nash equilibrium. Hence, we have completed the proof for
Theorem~\ref{thm5}.

\section{Proof of Theorem~\ref{thm4}}\label{sec-proof-thm4}

The claim is that the maximum social welfare is reached when (1) all members
have the same $\alpha_i$ value and (2) the number of produced items is just
enough every member's consumption. We consider three cases when calculating
the social welfare of the community: everyone is over-supplied, everyone is
under-supplied and some people are over-supplied while others are
under-supplied. Let $s = \sum_{i=1}^n \alpha_i$ denote the sum of everyone's
$\alpha_i$ value.

Case 1: Everyone is over-supplied with content.
The consumption reward a member $i$ is simply $(1-\alpha_i)N_cr_c$, therefore
the global total consumption reward is
\begin{equation}\label{pf4c1e1}
	\sum_{i=1}^n (1-\alpha_i)N_cr_c = N_cr_c(n-\sum_{i=1}^n \alpha_i) =
	N_cr_c(n-s)
\end{equation}
The global total production reward is equal to the total number of
consumption multiplied by $r_p$, which is
\begin{equation}\label{pf4c1e2}
	r_p\cdot \sum_{i=1}^n (1-\alpha_i)N_c = N_cr_p\left(n-\sum_{i=1}^n
	\alpha_i\right) = N_cr_p(n-s)
\end{equation}
Therefore the total global welfare for Case 1 is the sum of (\ref{pf4c1e1})
and (\ref{pf4c1e2}) which is
\begin{equation}
	G(\vec{\alpha}) = N_cr_c(n-s) + N_cr_p(n-s) = N_c(r_p +
	r_c)\left(n-\sum_{i=1}^n \alpha_i\right)
\end{equation}
We can see that, in the case of everyone being over-supplied, the social
welfare would decrease if anybody increases their $\alpha_i$; the optimal
value for this case is reached when everyone reduces their $\alpha_i$ as
much as possible as long as everyone is still over-supplied.

Now we want to show that the global welfare for this case is no greater than
the maximum global welfare shown in in Equation~(\ref{thm1globmax}) of
Theorem~\ref{thm4}, i.e., we want to show the following
\begin{align}
	& N_c(r_p + r_c)\left(n-\sum_{i=1}^n \alpha_i\right) \le
	\frac{\eta n^2N_pN_c(r_c+r_p)}{N_c + \eta nN_p} \\
	\Leftrightarrow \qquad &
	\left(n-\sum_{i=1}^n \alpha_i\right) \le
	\frac{\eta n^2N_p}{N_c + \eta nN_p} = n\cdot \left(1-\frac{N_c}{N_c+\eta
	nN_p}\right) \\
	\Leftrightarrow \qquad &
	\left(\sum_{i=1}^n \alpha_i\right) \ge
	n\cdot \frac{N_c}{N_c+\eta nN_p}\label{pf4c1e3}
\end{align}
Note that $N_c/(N_c+\eta nN_p)$ is the optimal $\alpha_i$ value which make
the number of produced items exactly the same as the number needed by every
member. The inequality in (\ref{pf4c1e3}) must be true because if it were
not true, it would cause a member to be under-supplied, which violates the
assumption for this case. Completing the proof for Case 1.

Case 2: Everyone is under-supplied with content. In this case, each member
consumes all content items that are available for them, therefore the total
consumption reward is
\begin{equation}
	\eta nN_pr_c\sum_{i=1}^n \alpha_{i} = \eta nN_pr_c s
\end{equation}

The total production reward is again the total number of consumptions
multiplied by $r_p$. Each of the produced items is consumed $\eta n$ times,
thus we have the total production reward as follows:
\begin{equation}
	\eta nr_p\sum_{i=1}^n \alpha_i N_p = \eta N_pr_pns
\end{equation}
Summing up the total consumption reward and the total production rewards,
the total global welfare for Case 2 is
\begin{equation}
	G(\vec{\alpha}) = \eta N_p(r_p+r_c)n s = \eta
	N_p(r_p+r_c)n\left(\sum_{i=1}^n \alpha_i\right)
\end{equation}
We can see that, in the case of everyone being under-supplied, the social
welfare would decrease if anybody decreases their $\alpha_i$; the optimal
value for this case is reached when everyone increase their $\alpha_i$ as
much as possible as long as everyone is still under-supplied.

Again, we want to show that the global welfare for this case is no greater
than the maximum global welfare shown in in Equation~(\ref{thm1globmax}) of
Theorem~\ref{thm4}, i.e., we want to show the following
\begin{align}
	& \eta N_p(r_p+r_c)n\left(\sum_{i=1}^n \alpha_i\right) \le
	\frac{\eta n^2N_pN_c(r_c+r_p)}{N_c + \eta nN_p} \\
	\Leftrightarrow \qquad &
	\sum_{i=1}^n \alpha_i \le n\cdot \frac{N_c}{N_c + \eta nN_p} \label{pf4c2e1}
\end{align}
Again, since $N_c/(N_c+\eta nN_p)$ is the $\alpha_i$ value that makes the
number of produced items exactly the same as the number needed by every
member. The inequality in (\ref{pf4c2e1}) must be true because if it were
not true, it would cause a member to be over-supplied, which violates the
assumption for this case. Completing the proof for Case 2.

Case 3 is in between Case 1 and Case 2, i.e., we assume that a portion of
the members in the community are over-supplied while others are
under-supplied.

Let $\bar{\alpha} = \sum_{i=1}^n \alpha_i / n$ be the average value of all
members $\alpha_i$ values. We will compare an arbitrary configuration in
Case 3 with the one where everyone choose the same $\alpha_i =
\bar{\alpha}$. Note that, if everyone choose the same $\alpha_i$, it is
either Case 1 or Case 2, i.e., either everyone is over-supplied or everyone
is under-supplied. We want to show that any configuration in Case 3 has
smaller social welfare than the configuration where everyone has the same
$\alpha_i = \bar{\alpha}$. We call this two configurations the ``original
configuration'' and the ``average configuration''.

Because of the choice of the value of $\bar{\alpha}$, the sum of everyone's
$\alpha_i$ value stay the same, therefore the total number of produced items
is the same between the two configurations. We will compare the social
welfares of the two configurations in two cases.

Case 3.1: Everyone is over-supplied in the average configuration. Then
everyone consumes exactly $N_c(1-\bar{\alpha})$ items in the average
configuration. Therefore, the consumption reward for each member is
$N_cr_c(1-\bar{\alpha})$. In the original configuration, some members have
greater $\alpha_i$ values than $\bar{\alpha}$ and some others have smaller
$\alpha_i$ values than $\bar{\alpha}$. Consider a pair of members of which
member A has $\alpha_i = \bar{\alpha} + \delta$ and member B has $\alpha_i =
\bar{\alpha} - \delta$. For member A, since she becomes even more
over-supplied and spends less time on consumption, her consumption reward
decreases by exactly $\delta N_c r_c$. On the other hand, member B's
consumption reward increases by \emph{at most} $\delta N_c r_c$, depending
on whether B becomes under-supplied after decreasing her $\alpha_i$ value
from $\bar{\alpha}$. Therefore, the all-members total consumption reward of
the original configuration is no greater that of the average configuration.

In terms of the total production reward, recall that it is the total number
of consumptions multiplied by $r_p$. Let $K$ denote the total number of
items produced. In the average configuration, for each item, the expected
number of consumptions contributed by each member is
$(1-\bar{\alpha}N_c)/K$. In the original configuration, consider again
member A with $\alpha_i = \bar{\alpha}+\delta$ and member B with $\alpha_i =
\bar{\alpha}-\delta$. Member A's contribution decreases by exactly $\delta
N_c/K$, while member B's contribution increases by \emph{at most} $\delta
N_c/K$. Therefore, the all-members total production reward of the original
configuration is no greater that of the average configuration. Thus, we can
conclude for Case 3.1 that the global welfare of the original configuration
is no greater than that of the average configuration.

Case 3.2: Everyone is under-supplied in the average configuration. Then
everyone consumes exactly $K = \eta N_p\sum_{i=1}^n \alpha_i$ items in the
average configuration. Therefore, the total consumption reward is $nKr_c$.
Now consider the original configuration with member A having $\alpha_i =
\bar{\alpha} + \delta$ and member B having $\alpha_i = \bar{\alpha}
-\delta$. For member B, she is still under-supplied and her consumption
reward stays the same since the total number of items stays the same. For
member A, her consumption reward stays the same of she is still
under-supplied and it decreases if she becomes over-supplied and consumes
less than $K$ items. Therefore, the all-members total consumption reward of
the original configuration is no greater that of the average configuration.

In terms of the total production reward, the average configuration has a
total production reward of $nKr_p$, whereas the original configuration can
have the same total production rewards, or less if anyone becomes
over-supplied by having a larger $\alpha_i$ and consumes less $K$ items.
Therefore, the all-members total production reward of the original
configuration is no greater that of the average configuration. Thus, we can
conclude for Case 3.2 that the social welfare of the original configuration
is no greater than that of the average configuration.

Since the average configuration belongs to either Case 1 or Case 2, both of
which are proven to have no greater global welfare than the maximum we
claimed in (\ref{thm1globmax}). Hence, the global welfare in
(\ref{thm1globmax}) is the maximum possible global welfare of the community.

\subsection*{Nash Equilibrium for the Uniform Structure}

When every member chooses $\alpha_i = N_c/(N_c + \eta nN_p)$, the
consumption reward of a member is
\begin{equation}
	R_c\left(\alpha_i = \frac{N_c}{N_c + \eta nN_p}\right) =
	(1-\alpha_i)N_cr_c = \frac{\eta nN_pN_cr_c}{N_c + \eta nN_p}
\end{equation}
The production reward of a member is
\begin{equation}
	R_p\left(\alpha_i = \frac{N_c}{N_c + nN_p}\right) = \alpha_iN_pr_p\eta n =
	\frac{\eta nN_pN_cr_p}{N_c + \eta nN_p}
\end{equation}
The total reward is the sum of the above two which is
\begin{equation}
	R\left(\alpha_i = \frac{N_c}{N_c + \eta nN_p}\right) =
	\frac{\eta nN_pN_c(r_c+r_p)}{N_c + \eta nN_p}
\end{equation}

Now we calculate the reward for member $i$ when she changes her strategy is
changed to $\alpha_i + \delta$. We divide the analysis into two cases:
$\delta > 0$ and $\delta < 0$.

Case 1: $\delta > 0$, i.e., member $i$ increases $\alpha_i$. This will make
some members over-supplied. For member $i$, the consumption reward decreases
because of the reduced amount of consumption time, i.e.,
\begin{equation}
	\Delta R_c = -\delta N_c r_c
\end{equation}
The total number of available items for the followers is now $\eta
\alpha_i n N_p + \delta N_p$, the probability of an item being chosen is
\begin{equation}
	\frac{\eta \alpha_i n N_p}{\eta \alpha_i n N_p + \delta N_p}
	= \frac{\eta \alpha_i n}{\eta \alpha_i n + \delta}
\end{equation}
The production reward for member $i$ becomes
\begin{equation}
	(\alpha_i + \delta)\eta n\cdot \frac{\eta \alpha_i n}{\eta \alpha_i n +
	\delta}\cdot N_pr_p
\end{equation}
Therefore, the change in member $i$'s total reward is
\begin{equation}
	\Delta R =
	(\alpha_i + \delta)\eta n\cdot \frac{\eta \alpha_i n}{\eta \alpha_i n +
	\delta}\cdot N_pr_p - \eta \alpha_i nN_pr_p - \delta N_c r_c
\end{equation}
We want to check the condition for $\Delta R < 0$. In the following
derivation, we let $K = N_pr_p / N_cr_c$.
\begin{align}
	&\Delta R =
	(\alpha_i + \delta)\eta n\cdot \frac{\eta \alpha_i n}{\eta \alpha_i n +
	\delta}\cdot N_pr_p - \eta \alpha_i nN_pr_p - \delta N_c r_c < 0 \\
	\Rightarrow \quad &
	K\left(\frac{\alpha_i\eta^2 n^2(\alpha_i +
	\delta)}{\alpha_i\eta n+\delta}-\alpha_i\eta n\right) < \delta \\
	\Rightarrow \quad &
	\frac{K\alpha_i \eta n\delta(\eta n-1)}{\alpha_i\eta n+\delta} < \delta \\
	\Rightarrow \quad &
	\frac{K\alpha_i\eta n(\eta n-1)}{\alpha_i\eta n+\delta} < 1 \qquad \#\;
	\delta > 0 \\
	\Rightarrow \quad &
	\delta > \alpha_i \eta n (K\eta n-K-1)\label{deltacond1}
\end{align}
To make sure (\ref{deltacond1}) is satisfied, we must have
\begin{align}
	& \alpha_i \eta n (K\eta n-K-1) \le 0 \\
	\Rightarrow \quad &
	(K\eta n-K-1) \le 0 \\
	\Rightarrow \quad &
	K\le \frac{1}{\eta n-1} \\
	\Rightarrow \quad &
	\frac{N_pr_p}{N_cr_c} \le \frac{1}{\eta n-1}\\
	\Rightarrow \quad &
	\eta \le \left(\frac{N_cr_c}{nN_pr_p} + \frac{1}{n}\right)\label{pf3c1cond}
\end{align}
Summary of Case 1: If (\ref{pf3c1cond}) is true, then the total reward for member
$i$ decreases if she increases her $\alpha_i$ to $\alpha_i+\delta\; (\delta
> 0)$.

Case 2: $\delta < 0$, i.e., member $i$ decreases her $\alpha_i$, then
everyone in the community become under-supplied. For member $i$, the
production reward is reduced simply because of the reduced value of
$\alpha_i$, i.e.,
\begin{equation}
	\Delta R_p = \delta \eta n N_pr_p
\end{equation}
The consumption reward for member $i$ is also reduced because of the reduced
number of items available for consumption (even though member $i$ tries to
spend more time on consumption).
\begin{equation}
	\Delta R_c = \delta N_p r_c
\end{equation}
Therefore the change in the total reward for member $i$ is
\begin{equation}
	\Delta R = \delta\cdot N_p(\eta nr_p + r_c) < 0 \qquad \#\; \delta < 0
\end{equation}
Case 2 done.

Combining Case 1 and Case 2, if and only if (\ref{pf3c1cond}) is true, the
change of $\alpha_i$ will result in a decrease of member $i$'s total reward,
therefore the strategy of choosing the original $\alpha_i$ results in a
Nash equilibrium. Completing the proof for Theorem~\ref{thm4}.

\section{Proof of Theorem~\ref{thm6}}\label{sec-proof-thm6}

\begin{proof}
	The $G_{\max\text{-celebrity}}$ in Eq~(\ref{gmax-celeb}) is greater than
	or equal to the $G_{\max\text{-uniform}}$ in Eq~(\ref{thm1globmax}) if
	and only if the following is true.
	\begin{align}
		& \eta(1-\eta)n^2 N_p (r_p + r_c) \ge
		\frac{\eta n^2N_pN_c(r_c+r_p)}{N_c + \eta nN_p}\\
		\Leftrightarrow\quad &
		1-\eta \ge \frac{N_c}{N_c + \eta n N_p}\\
		\Leftrightarrow\quad &
		(1-\eta)(N_c + \eta n N_p) \ge N_c\\
		\Leftrightarrow\quad &
		\eta (nN_p - N_c + n\eta N_p) \ge 0\\
		\Leftrightarrow\quad &
		nN_p - N_c + n\eta N_p \ge 0 \\
		\Leftrightarrow\quad &
		\eta \ge \frac{N_c - nN_p}{nN_p}\label{pf6cond}
	\end{align}
	According to the modelling assumption in Eq~(\ref{assum-nc}), $N_c -
	nN_p \le 0$, and since $\eta \ge 0$, the condition is (\ref{pf6cond}) is
	always true. Completing the proof of Theorem~\ref{thm6}.
\end{proof}

\end{document}